\documentclass[preprint,12pt]{elsarticle}

\usepackage[utf8]{inputenc}
\usepackage[T1]{fontenc}
\usepackage[margin=1in]{geometry}
\usepackage{amsmath}
\usepackage{amssymb}
\usepackage{amsthm}
\usepackage{mathtools}
\usepackage{graphicx}
\usepackage{float}
\usepackage{subcaption}
\usepackage{booktabs}
\usepackage{longtable}
\usepackage{array}
\usepackage{dsfont}
\usepackage{bbm}

\usepackage{hyperref}
\usepackage{xcolor}

\usepackage{graphicx}
\usepackage{amssymb}

\usepackage{lineno}

\theoremstyle{definition}
\newtheorem{definition}{Definition}
\newtheorem{proposition}{Proposition}
\newtheorem{remark}{Remark}

\hypersetup{
    colorlinks=true,
    linkcolor=blue!50!black,
    citecolor=blue!50!black,
    urlcolor=blue!50!black
}

\begin{document}

\begin{frontmatter}


\title{Network-Normative Belief Updating \\in High-Dimensional Ideological Space}

\author{Chico Q. Camargo \footnote{Department of Computer Science, University of Exeter, EX4 4QF, UK}}

\date{\today}


\begin{abstract}
\noindent Most mathematical models of opinion dynamics treat attitudes as scalar quantities or as positions on a low-dimensional ideological axis. Empirical attitudes, however, are bundles of positions across many policy issues, and the geometry of the resulting high-dimensional belief space is non-trivial. This paper develops a network-theoretic framework for analysing how individuals move through such a space between two waves of measurement. Continuous attitude profiles in $[0,1]^n$ are discretised onto regular grids of resolution $k$, occupied positions form a network whose adjacency is defined by single-issue unit moves, and densely populated regions of that network are interpreted as \emph{network-normative}: empirically common configurations of attitudes in the population.

We introduce a hierarchy of three null models against which observed movement can be benchmarked. A closed-form \emph{coverage baseline} $\pi_k = 1-(1-1/k^n)^N \approx 1-e^{-N/k^n}$ predicts the fraction of cells expected to be occupied at baseline under uniform random placement, with no behavioural parameters. A \emph{local random-walk} null retains each respondent's baseline position and asks whether destinations are over-represented in occupied regions relative to a uniform one- or two-step move. A \emph{marginal permutation} null preserves the per-issue distribution of opinion change while disrupting within-respondent cross-issue coupling. The three nulls form a hierarchy of increasing realism and isolate distinct contributions to apparent attraction.

Applying the framework to a two-wave panel of $N{=}1194$ respondents reporting attitudes on $n{=}10$ policy issues using continuous visual analogue scales, we obtain three findings. First, at the focal resolution $k{=}3$ the observed inside rate $\hat p_3 \approx 0.73$ exceeds the coverage baseline $\pi_3 \approx 0.02$ by a factor of $36$, exceeds the two-hop random-walk null by $\sim 0.30$, and exceeds the perturbation null by $\sim 0.04$; only the one-hop random walk is competitive (gap $\sim 0.06$). Second, the gap between the observed rate and each null evolves with $k$ in interpretable ways: the two-hop gap is large at every scale but decreases with $k$, the perturbation gap grows from near zero at $k{=}2$ to $\sim 0.14$ at $k{=}5$, and the one-hop gap is small throughout. Third, the closed-form coverage baseline accounts for the bulk of the slide $\hat p_k:\,0.93\!\to\!0.38$ as $k:\,2\!\to\!5$, but the empirical occupied fraction $|C_k|/k^n$ sits well below the analytic curve, indicating substantial baseline concentration. Network-normative attraction is therefore real but representation-contingent, and which null is exceeded changes systematically with $k$.
\end{abstract}

\end{frontmatter}

\section{Introduction}

Opinion dynamics has been a fertile ground for applied mathematics for several decades. Classical models treat individual attitudes as scalar quantities and study how they evolve under iterated social influence, including informational cascades~\cite{BikhchandaniSushilHirshleifer1992} and other forms of network-driven contagion~\cite{Watts2007, Lorenz2011}. This literature has generated rich theory about consensus, polarisation, and opinion clustering. It has also produced a robust qualitative empirical signature: when individuals are exposed to information about what other people think, aggregate attitudes tend to drift toward perceived majority positions, a so-called \emph{bandwagon} effect~\cite{mutz1992mass, nadeau1993new, marsh1985back, Kenney1996, Mehrabian1998}, while a non-trivial subset of the population responds in the opposite direction, a so-called \emph{underdog} or contrarian effect~\cite{Fleitas1971, Chatterjee2021}. Both phenomena coexist in real populations~\cite{Solomon1955, Bond1996}, and the aggregate behaviour reflects the mixture of conformist and reactant individuals.

Two simplifications, however, are present in nearly all of this literature. First, most studies of opinion change measure displacement on a single issue or along a one-dimensional left--right axis, even though real citizens hold structured bundles of views across multiple policy domains. A respondent can become more moderate on climate policy while becoming more polarised on immigration; a one-dimensional summary cannot recover this. Second, most studies report \emph{directional} change--the average shift toward or away from the cue--but say little about \emph{where} citizens move in attitude space. Two cohorts with the same mean shift can occupy very different geometric regions afterwards, with different implications for downstream collective behaviour.

This paper takes a different angle. Following work in computational social science that emphasises the role of network structure and high-dimensional state in shaping aggregate political behaviour~\cite{margetts2015political}, we represent each respondent as a point in $[0,1]^n$ where each coordinate encodes the level of agreement with one of $n$ policy statements. A two-wave panel measurement assigns to each respondent an origin and a destination in this space, and the question we ask is structural: when destinations are aggregated, do they concentrate in regions that were already densely populated at baseline? We call such regions \emph{network-normative} (in the sense of empirical commonness, not moral norm), and the structural-attraction hypothesis predicts that movement should disproportionately terminate inside them.

The hypothesis is interesting because at least three distinct mechanisms could produce it: social conformity to perceived majority configurations; cognitive pressure toward familiar bundles of issue positions; and interaction effects between issues that make some combinations more stable than others. It is also interesting because it is not trivially true. Two alternative processes can mimic apparent attraction without any structural pull. The first is \emph{local-neighbour diffusion}: if respondents take short steps from their baseline position, and nearby cells are unevenly populated, destinations may appear normative even without global pull. The second is \emph{regression to a central tendency}: if extreme baseline values soften over time, coarse discretisation can convert independent issue-level moderation into apparent movement toward dense regions. Distinguishing genuine attraction from these two artefacts is the central methodological problem the paper addresses.

The contribution is twofold. First, we develop a formal mathematical framework that places respondent profiles on a lattice graph, defines occupancy and normative regions, and articulates a hierarchy of null models against which empirical movement can be evaluated. The hierarchy ranges from a parameter-free coverage baseline that depends only on the geometry of the lattice and the sample size, through a local random-walk benchmark, to a permutation-based null that preserves marginal change distributions while disrupting within-respondent cross-issue coupling. Second, we apply the framework to a two-wave panel in which $N{=}1194$ respondents reported attitudes on $n{=}10$ policy issues using continuous visual analogue scales, with measurement at baseline and again roughly two days later.

The empirical analysis is organised around three findings, each of which the framework makes precise. \emph{(i)} At a focal resolution where the lattice has empirical structure but is not saturated by the sample, observed inside rates exceed the coverage baseline by a large factor, exceed the second-neighbour random walk by a sizeable margin, and exceed the marginal-permutation null by a small but positive amount; only the first-neighbour random walk is competitive. \emph{(ii)} The gap between the observed inside rate and each null evolves with the resolution $k$ in interpretable ways: the second-neighbour gap is large but decreasing in $k$, the perturbation gap grows from near-zero at coarse resolutions to non-trivial values at fine resolutions, and the first-neighbour gap is small throughout. \emph{(iii)} The closed-form coverage baseline accounts for the bulk of the resolution-driven slide of $\hat p_k$, while the systematic shortfall of the empirical occupied fraction $|C_k|/k^n$ relative to the analytic curve $\pi_k$ quantifies how much of the inside rate is carried by baseline concentration rather than lattice geometry.

Together, these findings replace the simpler claim ``attraction is real'' with a more refined statement: which null is exceeded, and by how much, changes systematically with $k$ in ways that the framework predicts. The methodological recommendation that follows is to compute and report each of the three nulls at every resolution, treating the gap profile rather than the inside rate as the empirically meaningful quantity.

The remainder of the paper is organised as follows. Section~\ref{sec:framework} sets up the mathematical framework. Section~\ref{sec:data} describes the data. Section~\ref{sec:results} presents the empirical results in three steps that mirror the three findings above. Section~\ref{sec:discussion} discusses the connection to coarse-graining ideas in statistical physics and to attractor landscapes in opinion-dynamics models. Section~\ref{sec:limitations} delimits the inferential scope of the analysis. Section~\ref{sec:conclusion} concludes.

\section{Mathematical Framework}\label{sec:framework}

\subsection{Belief state space and discretisation}

Let $n \in \mathbb{N}$ denote the number of policy issues under consideration. We represent each respondent's attitude profile as a point in the unit hypercube
\begin{equation}
    \mathcal{X} = [0,1]^n,
\end{equation}
where the $j$-th coordinate $a_j \in [0,1]$ encodes the respondent's level of agreement with the $j$-th statement, with $0$ indicating maximal disagreement and $1$ maximal agreement. Each coordinate is observed on a continuous visual analogue scale, so $\mathcal{X}$ is the natural domain for raw measurements.

Network-theoretic analysis requires a discrete state space. Fix a resolution $k \in \mathbb{N}$ with $k \geq 2$, partition $[0,1]$ into $k$ equal-width bins of width $1/k$, and represent each bin by its centroid. This yields the bin-centroid grid
\begin{equation}
    \mathcal{X}_k = \left\{\tfrac{1}{2k}, \tfrac{3}{2k}, \tfrac{5}{2k}, \ldots, \tfrac{2k-1}{2k}\right\}^n \subset \mathcal{X},
\end{equation}
with $k$ values per axis, step length $1/k$ between consecutive centroids, and total cell count $|\mathcal{X}_k| = k^n$. The discretisation map is the nearest-centroid projection
\begin{equation}
    \phi_k: \mathcal{X} \to \mathcal{X}_k, \qquad \phi_k(a) = \arg\min_{x \in \mathcal{X}_k} \|a-x\|_2,
\end{equation}
with ties broken by lexicographic order of coordinates. We will work with $k \in \{2,3,4,5\}$, corresponding to grid steps $\tfrac{1}{2}, \tfrac{1}{3}, \tfrac{1}{4}, \tfrac{1}{5}$ in the original measurement scale and reported in the empirical analysis as steps $0.50$, $0.33$, $0.25$, and $0.20$ respectively.

\subsection{Empirical occupation and the belief network}

Suppose the sample consists of $N$ respondents, each measured at two waves $t \in \{1,2\}$ (baseline and follow-up). Let $a_i^{(t)} \in \mathcal{X}$ denote the raw profile of respondent $i$ at wave $t$, and let $x_i^{(t)} = \phi_k(a_i^{(t)}) \in \mathcal{X}_k$ denote its discretised image at resolution $k$. The baseline empirical occupation measure is
\begin{equation}
    \mu_k(v) = \sum_{i=1}^{N}  \large\mathbbm{1}\!\left[x_i^{\text{origin}} = v\right], \qquad c \in \mathcal{X}_k,
\end{equation}
which counts the number of respondents whose discretised baseline profile coincides with cell $c$. The support of $\mu_k$,
\begin{equation}
    C_k = \mathrm{supp}(\mu_k) = \{c \in \mathcal{X}_k : \mu_k(c) > 0\},
\end{equation}
is the set of \emph{occupied cells} at baseline.

Adjacency in the discretised state space is defined by single-issue unit moves: two cells $u,v \in \mathcal{X}_k$ are adjacent if they differ in exactly one coordinate, and the difference equals one grid step $1/k$. Equivalently, define the rescaled Hamming distance $d_H(u,v) = \#\{j: u_j \neq v_j\}$ when both points lie on the same lattice; restricting to neighbours that differ by one grid step in a single coordinate gives the edge set
\begin{equation}
    E_k = \left\{(u,v) \in \mathcal{X}_k \times \mathcal{X}_k : \exists j \text{ such that } u_\ell = v_\ell \;\forall \ell \neq j \text{ and } |u_j - v_j| = \tfrac{1}{k}\right\}.
\end{equation}
The pair $\mathcal{G}_k = (\mathcal{X}_k, E_k)$ is the lattice graph of the discretised hypercube. The \emph{empirical belief network} $G_k$ is the induced subgraph on $C_k$. The resulting graph is illustrated in Figure~\ref{fig:discretisation} for the case of $N=3$ issues.

\begin{figure}[ht]
\centering
\includegraphics[width=0.98\linewidth]{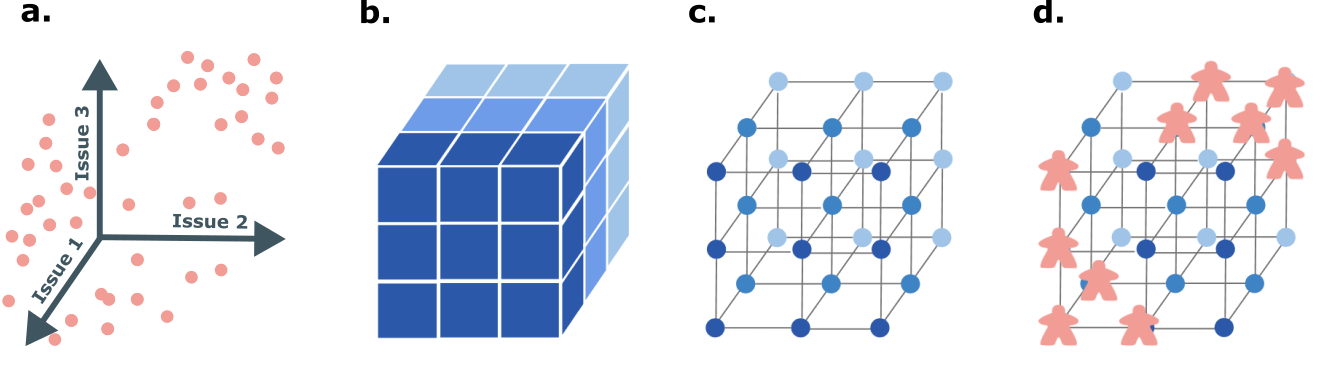}
\caption{\textbf{Construction of the belief-space graph.} (a) Each respondent is a point in $\mathcal{X} = [0,1]^n$; updating between waves is a transition in this space. (b) The bin-centroid lattice $\mathcal{X}_k$ at $k=3$. (c) the graph corresponding to the bin-centroid lattice. (d) The same graph, with empirical occupancy $\mu_k(v)$ visualised in three dimensions.}
\label{fig:discretisation}
\end{figure}

\begin{definition}[Network-normative regions]
For a threshold $\theta \in \mathbb{N}$, the network-normative region at resolution $k$ is
\begin{equation}
    \mathcal{N}_k(\theta) = \{c \in C_k : \mu_k(c) \geq \theta\}.
\end{equation}
The case $\theta=1$ recovers the full occupied set $C_k$ (\emph{untrimmed} occupancy); the case $\theta=2$ excludes singleton cells (\emph{trimmed} occupancy).
\end{definition}

The choice of $\theta$ is a modelling decision, not a feature of the data: trimmed occupancy tests whether structural conclusions depend on rare baseline configurations.

\subsection{Transitions and the inside rate}

For each respondent $i$, the discretised pair $(x_i^{\text{origin}}, x_i^{\text{destination}})$ is an origin--destination edge in $\mathcal{X}_k \times \mathcal{X}_k$, not necessarily a single-step edge in $E_k$. The object of interest here is the \emph{inside indicator}
\begin{equation}
    Y_i(k,\theta) = \large\mathbbm{1}\!\left[x_i^{\text{destination}} \in \mathcal{N}_k(\theta)\right],
\end{equation}
which records whether respondent $i$'s follow-up profile falls in a network-normative region defined at baseline. Aggregating over respondents gives the empirical inside rate
\begin{equation}
    \hat{p}_k(\theta) = \frac{1}{N} \sum_{i=1}^{N} Y_i(k,\theta).
\end{equation}

The question of network-normative attraction reduces to the comparison between $\hat{p}_k(\theta)$ and an appropriate null prediction. The remainder of this section develops three such null models.

\subsection{Null Model 1: analytic coverage baseline}\label{subsec:coverage}

The most parsimonious null model assumes that respondents at follow-up are placed independently and uniformly at random on the discretised grid $\mathcal{X}_k$, with no relationship to the baseline distribution. Under this null model, the probability that a randomly sampled cell in $\mathcal{X}_k$ is occupied at baseline is the coverage fraction.

\begin{proposition}[Coverage baseline]\label{prop:coverage}
Suppose $N$ respondents are placed at baseline independently and uniformly on $\mathcal{X}_k$. Then for any cell $v \in \mathcal{X}_k$,
\begin{equation}
    \mathbb{P}[v \in C_k] = 1 - \left(1 - \frac{1}{k^n}\right)^N \;=:\; \pi_k,
\end{equation}
and consequently the expected number of occupied cells is $E[|C_k|] = k^n \pi_k$. For $k^n \gg 1$,
\begin{equation}
    \pi_k \approx 1 - \exp\!\left(-\frac{N}{k^n}\right). \label{eq:coverage-approx}
\end{equation}
\end{proposition}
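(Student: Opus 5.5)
The plan is a complement argument followed by linearity of expectation. First I fix a cell $v \in \mathcal{X}_k$. Under the hypothesis of Proposition~\ref{prop:coverage}, the discretised baseline positions $x_1^{\text{origin}},\dots,x_N^{\text{origin}}$ are i.i.d.\ uniform on the $k^n$ cells of $\mathcal{X}_k$. Hence each satisfies $\mathbb{P}[x_i^{\text{origin}} = v] = 1/k^n$. By definition of $C_k = \mathrm{supp}(\mu_k)$, the event $v \notin C_k$ is exactly the event that $\mu_k(v) = 0$, that is, that no respondent lands in $v$. This is the intersection of the $N$ events $\{x_i^{\text{origin}} \neq v\}$.

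By independence, this intersection has probability $\prod_{i=1}^N (1 - 1/k^n) = (1-1/k^n)^N$. Taking complements gives $\mathbb{P}[v\in C_k] = \pi_k$. The value does not depend on $v$, which is where the uniformity assumption enters.

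For the expected occupancy, I write
\[
|C_k| = \sum_{v \in \mathcal{X}_k} \mathbbm{1}[v \in C_k].
\]
Linearity of expectation then gives $E[|C_k|] = \sum_v \mathbb{P}[v\in C_k] = k^n \pi_k$. No independence between cells is needed here. This matters because the occupancy indicators are in fact negatively dependent, and linearity sidesteps that entirely.

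For the approximation \eqref{eq:coverage-approx}, I set $m = k^n$ and write
\[
(1-1/m)^N = \exp\!\big(N \log(1-1/m)\big).
\]
Then I use $\log(1-x) = -x - x^2/2 - \cdots$, so that $N\log(1-1/m) = -N/m + O(N/m^2)$. When $m \gg 1$, and provided $N/m^2 \to 0$, the correction is negligible and $\pi_k \approx 1 - e^{-N/m}$.

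This last point is the only step needing care. The approximation is really a statement about $N/m^2$ being small, not merely about $m$ being large. I would make it quantitative with the elementary two-sided bound $-x-x^2 \le \log(1-x) \le -x$, valid for $0\le x\le 1/2$. This yields
\[
e^{-N/m - N/m^2} \le (1-1/m)^N \le e^{-N/m}.
\]
The absolute error in $\pi_k$ is therefore at most $e^{-N/m}\big(1-e^{-N/m^2}\big) \le N/m^2$. For the paper's setting ($N=1194$, $n=10$, $k\ge 2$, so $m \ge 1024$), this bound is of order $10^{-3}$, and far smaller for $k \ge 3$. That justifies the stated approximation. The remaining steps are routine.
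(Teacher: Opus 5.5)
Your proof is correct and uses the same complement-plus-independence argument as the paper. It also adds the linearity-of-expectation step for $E[|C_k|]$, which the paper leaves implicit, and an explicit error bound where the paper only invokes $(1-x)^N \approx e^{-Nx}$.

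One correction to your commentary: if you keep the factor $e^{-N/m}$ in your bound, the error is at most $\frac{1}{m}\cdot\frac{N}{m}e^{-N/m} \le \frac{1}{e\,k^n}$ uniformly in $N$ (since $te^{-t}\le 1/e$), so the paper's hypothesis $k^n \gg 1$ already suffices and smallness of $N/m^2$ is not actually needed.
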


\begin{proof}
Consider a fixed cell $c \in \mathcal{X}_k$. We want $\mathbb{P}[c \in C_k]$, 
the probability that at least one of the $N$ respondents lands on $c$ after 
discretisation. It is easier to work with the complementary event. Under the 
null model, each respondent is placed independently and uniformly at random on 
$\mathcal{X}_k$, which has $k^n$ cells in total, so the probability that a 
single respondent misses $c$ is
\begin{equation}
    \mathbb{P}[\text{respondent } i \text{ misses } c] = 1 - \frac{1}{k^n}.
\end{equation}
Since respondents are placed independently, the probability that all $N$ of 
them miss cell $c$ is
\begin{equation}
    \mathbb{P}[c \notin C_k] = \left(1 - \frac{1}{k^n}\right)^N.
\end{equation}

The approximation~\eqref{eq:coverage-approx} is the standard limit $(1-x)^N \approx e^{-Nx}$ for small $x$.
\end{proof}

The coverage fraction $\pi_k$ is a closed-form, parameter-free function of $(k, n, N)$. It is also, under the uniform-baseline null model, the expected probability that a uniformly drawn follow-up cell falls inside the occupied set. Hence under this null model, the average empirical occupation rate for a normative network with $\theta = 1$ (i.e. at least one occupant per cell) is:
\begin{equation}
    E[\hat{p}_k(1)] = \pi_k,
\end{equation}
where the expectation is taken over both baseline. Equation~\eqref{eq:coverage-approx} predicts that the inside rate must collapse rapidly as $k$ grows, simply because $k^n$ grows much faster than $N$. For the empirical configuration $N=1194$, $n=10$, the ratio $N/k^n$ takes the values
\begin{equation}
    \frac{N}{2^{10}} \approx 1.17, \quad \frac{N}{3^{10}} \approx 2.02 \times 10^{-2}, \quad \frac{N}{4^{10}} \approx 1.14 \times 10^{-3}, \quad \frac{N}{5^{10}} \approx 1.22 \times 10^{-4},
\end{equation}
yielding analytic coverage predictions
\begin{equation}
    \pi_2 \approx 0.689, \quad \pi_3 \approx 0.0200, \quad \pi_4 \approx 0.00114, \quad \pi_5 \approx 0.000122. \label{eq:coverage-values}
\end{equation}
The coverage fraction $\pi_k$ sets the floor above which the inside rate becomes informative: any observed $\hat{p}_k$ at or below $\pi_k$ is 
consistent with purely random follow-up placement and carries no behavioural 
signal. At $k=2$, this floor sits at $\pi_2 \approx 0.69$, meaning an 
observed inside rate must clear $69\%$ before it tells us anything beyond 
lattice geometry. At $k=5$, the floor collapses to $\pi_5 \approx 10^{-4}$, 
so even a very small observed rate is already informative. The steep drop in 
$\pi_k$ as $k$ increases --- driven entirely by the growth of $k^n$ relative 
to $N$ --- therefore accounts mechanically for a large portion of the 
empirical scale gradient in $\hat{p}_k$: much of the slide from high inside 
rates at coarse resolution to low inside rates at fine resolution is a 
geometric inevitability, not a behavioural finding.
\begin{remark}
The coverage baseline assumes both baseline placement and follow-up placement are uniform on $\mathcal{X}_k$. If the baseline distribution is concentrated rather than uniform, the expected number of distinct occupied cells is smaller than $k^n \pi_k$ (because samples cluster), so a uniformly random follow-up draw lands inside the occupied set with probability \emph{below} $\pi_k$, not above. The drivers of empirical inside rates above $\pi_k$ are therefore (i) persistence of individual position between one time step and the next, which forces the inside indicator toward $1$ regardless of the size of $C_k$, and (ii) concentration of the follow-up distribution in the same region as the baseline, which is precisely the structural-attraction signal of interest. Disambiguating (i) from (ii) requires the more refined null models below, and is what the local random-walk and permutation comparisons are designed to do.
\end{remark}

\subsection{Null model 2: local random-walk baseline}\label{subsec:localrw}

The uniform coverage null model treats follow-up positions as independent of baseline positions. This is unrealistic: between two waves separated by a few days, individual positions are highly persistent. A more demanding null retains the baseline position of each respondent and assumes that movement, conditional on origin, is a random walk on the lattice graph.

For a cell $x \in \mathcal{X}_k$ and a radius $r \in \mathbb{N}_0$, define the $r$-hop reachable set
\begin{equation}
    \mathcal{R}_r(x) = \{y \in \mathcal{X}_k : d_{\mathcal{G}_k}(x,y) \leq r\},
\end{equation}
where $d_{\mathcal{G}_k}$ is the graph distance on the lattice $\mathcal{G}_k$. Equivalently, $\mathcal{R}_r(x)$ is the set of cells reachable from $x$ in at most $r$ single-issue unit moves.

\begin{proposition}[Reachable-set sizes]\label{prop:reach}
Suppose $k \geq 3$, and let $x \in \mathcal{X}_k$ be an \emph{interior} cell, defined as a cell whose bin index in every coordinate lies in $\{1, \ldots, k-2\}$ (so that one-step moves up and down are both available in every coordinate). Then
\begin{align}
    |\mathcal{R}_1(x)| &= 1 + 2n, \\
    |\mathcal{R}_2(x)| &= 1 + 2n + 2n(n-1) + 2n \;=\; 1 + 4n + 2n(n-1) \;=\; 1 + 2n(n+1), \label{eq:R2}
\end{align}
where the contributions in~\eqref{eq:R2} are: the origin itself (1); single-issue moves of one step (2n); pairs of distinct issues each moved by one step (2n(n-1), counting both signs); and single-issue moves of two steps in one issue (2n).
\end{proposition}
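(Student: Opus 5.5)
The plan is to reduce the graph distance on $\mathcal{G}_k$ to an $\ell^1$ distance on bin indices, and then count lattice points shell by shell. Write each cell $x\in\mathcal{X}_k$ as $x_j=(2b_j+1)/(2k)$ with bin index $b_j\in\{0,\dots,k-1\}$. By the definition of $E_k$, two cells are adjacent exactly when their index vectors differ by $\pm1$ in a single coordinate. So $\mathcal{G}_k$ is the Cartesian product $P_k\,\square\,\cdots\,\square\,P_k$ of $n$ paths.

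\textbf{Step 1: distance formula.} The first step is to show
\[
d_{\mathcal{G}_k}(x,y)=\sum_{j=1}^n |b_j(x)-b_j(y)|.
\]
The upper bound holds because a path can adjust the coordinates one unit at a time. The lower bound holds because each edge changes the right-hand side by exactly one. Consequently $\mathcal{R}_r(x)$ is the set of index vectors in $\{0,\dots,k-1\}^n$ at $\ell^1$ distance at most $r$ from $b(x)$. We can therefore write $|\mathcal{R}_r(x)|=\sum_{s=0}^r |S_s(x)|$, where $S_s(x)$ is the set of cells at exact distance $s$.

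\textbf{Step 2: the shells.} A displacement $\delta\in\mathbb{Z}^n$ with $\|\delta\|_1=s$ gives a cell of $S_s(x)$ precisely when $b(x)+\delta$ stays inside the box.
\begin{itemize}
\item For $s=0$ the only displacement is $\delta=0$, contributing $1$.
\item For $s=1$ the displacements are $\pm e_j$. Interiority (indices in $\{1,\dots,k-2\}$) makes all $2n$ of them admissible, which gives $|\mathcal{R}_1(x)|=1+2n$.
\item For $s=2$ there are two displacement types. The first type is $\pm e_j\pm e_\ell$ with $j<\ell$. There are $4\binom n2=2n(n-1)$ of these, and all are admissible by interiority, since each coordinate moves by only one step. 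The second type is $\pm 2e_j$, of which there are $2n$.
\end{itemize}
Summing the shells gives $1+2n+2n(n-1)+2n=1+2n(n+1)$. Verifying this algebraic identity is routine.

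\textbf{Main obstacle: the $\pm 2e_j$ moves.} The only delicate point is the admissibility of the $\pm 2e_j$ displacements. These require $b_j\in\{2,\dots,k-3\}$, which is stronger than the stated interiority condition. For example, at $k=3$ every interior cell has $b_j=1$, so a two-step move in one issue leaves the grid.

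I would handle this in two parts. First, I would state plainly that $|\mathcal{R}_1(x)|=1+2n$ holds under the given hypothesis. Second, for $|\mathcal{R}_2(x)|$ I would strengthen the hypothesis to "two-step interior", meaning $b_j\in\{2,\dots,k-3\}$ for all $j$; this forces $k\ge5$. Under the original hypothesis, the same shell count gives the bounds
\[
1+2n+2n(n-1)\;\le\;|\mathcal{R}_2(x)|\;\le\;1+2n(n+1),
\]
where the upper bound is attained exactly when every coordinate admits both two-step moves. Equality for all $k\ge3$ holds only in the generic, unbounded-lattice sense that the counting formula expresses.
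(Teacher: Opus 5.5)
Your argument is correct. It is the same shell-by-shell enumeration the paper relies on. The paper's only justification is the four-term list of contributions written into the statement itself. You make that enumeration rigorous in two ways the paper does not:
\begin{itemize}
\item the reduction $d_{\mathcal{G}_k}(x,y)=\|b(x)-b(y)\|_1$, which shows that the four displacement classes are exactly the cells at graph distance at most $2$, with no overlaps;
\item an explicit admissibility check for each class.
\end{itemize}

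That check exposes a genuine error in the statement, and your diagnosis is right. The hypothesis $b_j\in\{1,\dots,k-2\}$ is built, as its own parenthetical says, only to make the one-step moves $\pm e_j$ available. The paper's count of $2n$ for the moves $\pm 2e_j$ is the count on the unbounded lattice $\mathbb{Z}^n$ and is never checked against the box.

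For an interior cell the correct value is $|\mathcal{R}_2(x)|=1+2n^2+m(x)$, where $m(x)\in\{0,\dots,2n\}$ counts the admissible two-step single-issue moves.
\begin{itemize}
\item At the paper's focal resolution $k=3$, the unique interior cell $(1,\dots,1)$ has $m=0$. So $|\mathcal{R}_2|=1+2n^2=201$ for $n=10$, not $221$.
\item At $k=4$, every interior cell has exactly one admissible two-step move per coordinate, giving $1+2n^2+n=211$.
\item The value $1+2n(n+1)$ is attained only when $b_j\in\{2,\dots,k-3\}$ for every $j$, which forces $k\ge 5$.
\end{itemize}

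Your repaired statement is therefore the correct one:
\begin{itemize}
\item $|\mathcal{R}_1(x)|=1+2n$ under the stated hypothesis;
\item $1+2n^2\le|\mathcal{R}_2(x)|\le 1+2n(n+1)$ for interior cells;
\item the upper bound holds with equality under your two-step interiority condition.
\end{itemize}

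Some parts of the paper survive this correction. The subsequent remark $|\mathcal{R}_2|\le 221$ is still valid as an upper bound. The empirical local random-walk expectations are unaffected, because the Methods section enumerates reachable sets directly and excludes moves that leave the hypercube.

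What fails is the closed-form equality. It also undermines the contrast between the $k=2$ values $11$ and $56$ and the ``interior values $21$ and $221$'', since no cell reaches $221$ at $k=3$ or $k=4$. Your closing hedge about the ``generic, unbounded-lattice sense'' is accurate, but you can state the conclusion more plainly: under the stated hypothesis, the $\mathcal{R}_2$ equality is false.
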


For boundary cells the sizes are smaller, since some moves are geometrically infeasible. For $n=10$ and $k \geq 3$, the upper bounds are $|\mathcal{R}_1| \leq 21$ and $|\mathcal{R}_2| \leq 221$.

The case $k=2$ is qualitatively different and must be handled separately. At $k=2$ each axis offers exactly one neighbour (the only other bin), so every cell has the same reachable-set sizes:
\begin{equation}
    |\mathcal{R}_1(x)| = 1 + n, \qquad |\mathcal{R}_2(x)| = 1 + n + \binom{n}{2}, \qquad k=2. \label{eq:R_k2}
\end{equation}
For $n=10$ this gives $|\mathcal{R}_1| = 11$ and $|\mathcal{R}_2| = 56$, rather than the interior values $21$ and $221$.

The local random-walk null assigns to respondent $i$ a destination drawn uniformly from $\mathcal{R}_r(x_i^{\text{origin}})$. Under this null, the probability of landing inside $\mathcal{N}_k(\theta)$ conditional on origin $x_i^{\text{origin}}$ is
\begin{equation}
    p_{r,k,\theta}^{\mathrm{loc}}\!\left(x_i^{\text{origin}}\right) = \frac{|\mathcal{R}_r(x_i^{\text{origin}}) \cap \mathcal{N}_k(\theta)|}{|\mathcal{R}_r(x_i^{\text{origin}})|}, \label{eq:p_loc}
\end{equation}
and the expected inside rate is
\begin{equation}
    E_{\mathrm{loc}}[\hat{p}_k(\theta)] = \frac{1}{N} \sum_{i=1}^N p_{r,k,\theta}^{\mathrm{loc}}\!\left(x_i^{\text{origin}}\right). \label{eq:E_loc}
\end{equation}
We compute equation~\eqref{eq:p_loc} empirically for each respondent at both $r=1$ (first-neighbour expectation) and $r=2$ (second-neighbour expectation), giving an individual-level benchmark that retains the baseline geometry.

\begin{remark}
The first-neighbour null model ($r=1$) is the most conservative: it asks whether a single-step move biases destinations toward $\mathcal{N}_k(\theta)$ above what the local lattice already imposes. The second-neighbour null model ($r=2$) is more permissive: it allows respondents to take up to two issue-coordinate moves and asks whether destinations are still over-represented in dense regions. Both are unbiased in the sense that, in expectation, a uniform random walker has inside rate equal to the average local coverage in equation~\eqref{eq:p_loc}.
\end{remark}

\subsection{Null model 3: marginal permutation}

The local random-walk null model is geometric: it does not use the empirical distribution of issue-level changes. To complement it, we introduce a permutation null that preserves the marginal distribution of opinion change on each issue while disrupting the within-respondent coupling of changes across issues.

For each issue $j \in \{1, \ldots, n\}$, let $\Delta_{ij} = a_{ij}^{\text{destination}} - a_{ij}^{\text{origin}}$ denote the observed change for respondent $i$ on issue $j$. Draw, independently for each issue, a uniformly random permutation $\sigma_j$ of the respondent indices, and set the perturbed follow-up profile
\begin{equation}
    \tilde{a}_{ij} = a_{ij}^{\text{origin}} + \Delta_{\sigma_j(i),\, j}.
\end{equation}
Each issue is permuted independently, so the marginal distribution of $\Delta$ on issue $j$ is preserved, while any joint structure across issues within respondents is destroyed. Discretising $\tilde{a}_i$ via $\phi_k$ and recomputing the inside rate yields a perturbed estimate $\tilde{p}_k(\theta)$. Repeating the procedure many times and comparing $\hat{p}_k(\theta)$ to the distribution of $\tilde{p}_k(\theta)$ provides a non-parametric test for whether within-respondent coupling of issue changes contributes to apparent attraction beyond what the marginal change distributions already imply. We use $200$ independent permutations and report the $2.5$ and $97.5$ percentiles of the resulting $\tilde{p}_k(\theta)$ distribution.

\subsection{Hierarchy of null models}

The three nulls form a hierarchy of increasing realism, summarised in Table~\ref{tab:nulls}. The coverage null model (1) ignores baseline persistence entirely and is parameter-free; the local random-walk null model (2) preserves baseline positions and lattice geometry but assumes uniform random destinations within a fixed radius; the permutation null model (3) preserves baseline positions and marginal change distributions but disrupts cross-issue coordination. A behavioural signal of network-normative attraction must, at minimum, exceed model 1; to be a genuine signal of \emph{coordinated} attraction it should exceed model 3.

\begin{table}[h]
\centering
\small
\begin{tabular}{p{2.8cm} p{5.4cm} p{7cm}}
\toprule
\textbf{Null} & \textbf{Preserves} & \textbf{Disrupts} \\
\midrule
1. Coverage & Lattice geometry; sample size $N$ & Baseline positions; persistence; cross-issue structure \\
2. Local RW & Baseline positions; lattice geometry; reachable-set structure & Direction of movement; cross-issue coupling \\
3. Permutation & Baseline positions; per-issue marginal change distributions & Within-respondent coupling of changes across issues \\
\bottomrule
\end{tabular}
\caption{The three null models compared to the observed inside rate $\hat{p}_k(\theta)$.}
\label{tab:nulls}
\end{table}

\section{Data}\label{sec:data}

The framework of Section~\ref{sec:framework} is applied to a two-wave panel originally designed to study the impact of opinion polls on individual policy preferences in the United Kingdom. The dataset is described in detail in \citep{pantazipublic}; we summarise here only the features relevant to the network analysis. Treatment and control conditions from the original design are pooled throughout: the present analysis is structural rather than causal, and any conditional differences between conditions are absorbed in the marginal change distribution that the perturbation null preserves.

\paragraph{Sample} Participants were recruited via the Prolific platform, with stratification matching the UK Office of National Statistics distribution on age, gender, and ethnicity. The full sample contained 1209 UK residents; 1174 completed the entire survey. After restricting to respondents observed at both waves with full responses on the ten focal issues, the analytical sample for the network analysis is $N=1194$. The mean age was $45.3$ years ($SD = 15.5$); $51.9\%$ identified as female, $47.9\%$ as male, and $0.2\%$ unspecified. Inclusion criteria required UK residency, UK citizenship, and age $\geq 18$.

\paragraph{Issues} Each respondent reported attitudes on $n=10$ policy issues spanning politically salient domains. Each issue was associated with a single statement; respondents indicated their level of agreement on a continuous visual analogue scale rendered as a vertical slider, with endpoints labelled ``Strongly disagree'' and ``Strongly agree''. The ten issues and their statements are listed in Table~\ref{tab:issues}.

\begin{table}[h]
\centering
\small
\begin{tabular}{p{3cm} p{12cm}}
\toprule
\textbf{Issue} & \textbf{Statement} \\
\midrule
Climate change & The government should be doing more to tackle climate change. \\
Social inequality & The government should take more steps to reduce social inequality. \\
Human rights & The government should do more to protect human rights regardless of race, religion etc. \\
Immigration & The government should reduce the level of Immigration into Britain. \\
LGBTQ+ rights & It is important to ensure LGBTQ+ people have the same rights as other members of society. \\
Women's rights & More should be done to ensure women have equal rights to men. \\
Misinformation & Social media websites should not display fake news stories. \\
Hate speech & I would support action to tackle online hate speech. \\
EU membership & Leaving the EU will be positive for the United Kingdom. \\
COVID-19 & During the COVID-19 crisis, the government should prioritise saving lives over protecting the economy. \\
\bottomrule
\end{tabular}
\caption{The ten policy issues and their corresponding statements presented to participants. Continuous responses on each issue are rescaled to $[0,1]$ to form the coordinates of $a_i^{(t)} \in \mathcal{X}$.}
\label{tab:issues}
\end{table}

\paragraph{Experimental design} The experiment used a two-wave panel design with within-subject repeated measurement, the two waves separated by approximately two days. At wave 1, all participants reported their attitudes on the ten issues; at wave 2, all participants again reported their attitudes on the same ten issues. The original design also included a between-subjects manipulation in which a randomly assigned subgroup received opinion-poll information between waves; that manipulation is not part of the present analysis.

\paragraph{Network construction inputs} For the analyses in this paper, the relevant inputs from the experiment are the pairs of attitude profiles $(a_i^{\text{origin}}, a_i^{\text{destination}}) \in [0,1]^{10} \times [0,1]^{10}$. Auxiliary measurements (issue-importance rankings, the Ten-Item Personality Inventory, perceived influence items) are not used. The analytical pipeline applies $\phi_k$ for $k \in \{2,3,4,5\}$, constructs $\mu_k$ and $\mathcal{N}_k(\theta)$ for $\theta \in \{1,2\}$, computes the inside indicator $Y_i(k,\theta)$, and benchmarks $\hat{p}_k(\theta)$ against the three nulls of Section~\ref{sec:framework}.

\paragraph{Properties relevant to the framework} Two features of the data make it suitable for the proposed analysis. First, the continuous visual analogue scales preserve fine-grained variation, allowing the discretisation map $\phi_k$ to be applied at multiple resolutions without the user-imposed coarseness of a five- or seven-point Likert scale. Second, the two-wave structure yields explicit origin--destination pairs in $\mathcal{X}$, which are the basic unit of the transition analysis.

\paragraph{Marginal structure} The original analysis of these data found that issue-level opinion change does not follow a Gaussian distribution but rather a Laplace distribution, with sharper peak and heavier tails (Figure~\ref{fig:laplace} in the appendix; pooled MLE scale $b \approx 0.087$). This matters for the network analysis in two ways. First, it implies that large displacements are not vanishingly rare: respondents do occasionally move several grid steps in a single issue, so transitions are not confined to immediate neighbourhoods of $x_i^{\text{origin}}$. Second, it justifies retaining the full distribution of changes in the permutation null model, rather than truncating tails on the Gaussian assumption that would otherwise discard a non-negligible fraction of observations.

\section{Results}\label{sec:results}

This section presents the empirical analysis in three steps that mirror the three findings stated in the introduction. Subsection~\ref{subsec:results-focal} establishes the lead claim at the focal resolution $k=3$. Subsection~\ref{subsec:results-scale} examines how the gap between the observed inside rate and each null evolves with $k$. Subsection~\ref{subsec:results-geometry} grounds the scale dependence in the analytic coverage baseline. Throughout we report results under untrimmed occupancy ($\theta=1$) in the main text; trimmed-occupancy results are summarised in~\ref{app:robustness} and noted where they differ qualitatively.

\subsection{Network normativity at the focal scale}\label{subsec:results-focal}

We choose $k=3$ as the focal resolution: the bin-centroid lattice has $3^{10} = 59{,}049$ cells and the sample of $N=1194$ occupies $|V_3| = 359$ of them, giving an empirical occupied fraction $|V_3|/3^{10} \approx 0.61\%$ that is small enough for the inside indicator to be informative but large enough for the framework to have empirical structure to read off. Figure~\ref{fig:focal} reports the four nulls at this resolution.

\begin{figure}[H]
\centering
\includegraphics[width=1.0\linewidth]{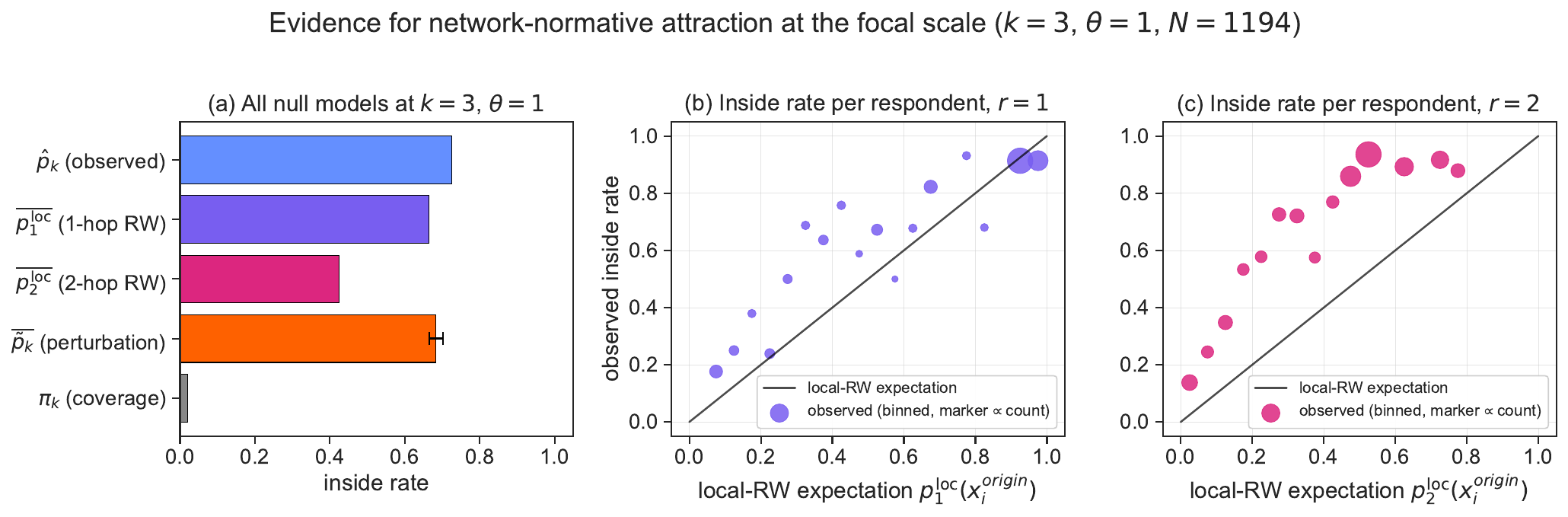}
\caption{\textbf{Network-normative attraction at the focal scale ($k=3$, $\theta=1$, $N=1194$).} (a) Observed inside rate $\hat p_3$ alongside the two local random-walk expectations $\overline{p^{\rm loc}_1}$ and $\overline{p^{\rm loc}_2}$, the perturbation null mean $\overline{\tilde p_3}$ (with $95\%$ band from $200$ permutations), and the analytic coverage baseline $\pi_3$. (b)--(c) Per-respondent observed inside rates as a function of the local random-walk expectation $p^{\rm loc}_r(x_i^{\text{origin}})$ for $r=1$ and $r=2$, binned along the $x$-axis with marker size proportional to bin count.}
\label{fig:focal}
\end{figure}

The numerical readings from Figure~\ref{fig:focal}(a) are:
\begin{align*}
    \hat p_3 &= 0.725, & \overline{p^{\rm loc}_1} &= 0.665, & \overline{p^{\rm loc}_2} &= 0.424, \\
    \overline{\tilde p_3} &= 0.684 \;[0.665, 0.702], & \pi_3 &= 0.020.
\end{align*}
The observed inside rate $\hat{p}_3$ exceeds the coverage baseline $\pi_3$ by a factor of $36$, exceeds the second-neighbour random-walk expectation $\overline{p^{\rm loc}_2}$ by $0.30$, and exceeds the first-neighbour random walk $\overline{p^{\rm loc}_1}$ by $0.06$, the perturbation null model $\overline{\tilde p_3}$ by $0.04$ (well outside the $95\%$ band).

The per-respondent panels of Figure~\ref{fig:focal} show that this aggregate gap is not driven by a few outliers. Almost every binned aggregate sits at or above the diagonal, with the strongest deviation at intermediate $p^{\rm loc}_r$ where the local-random null is most informative. For $r=2$ the binned points are well above the diagonal across the full $[0,1]$ range; for $r=1$ they lie close to but slightly above the diagonal, consistent with the small aggregate gap above $\overline{p^{\rm loc}_1}$. Trimmed occupancy at the same resolution gives larger gaps in every comparison ($\hat p_3 = 0.651$ vs.\ $\overline{p^{\rm loc}_1} = 0.533$, $\overline{p^{\rm loc}_2} = 0.273$, $\overline{\tilde p_3} = 0.561$), so the conclusion at this focal scale is conservative under untrimmed occupancy.
\\
\\
The empirical reading is therefore:
\begin{itemize}
    \item Network-normative attraction is real at the focal scale ($k=3$):  respondents land inside $\mathcal{N}_3(\theta)$ much more often than the lattice-geometry coverage baseline predicts and much more often than a uniform two-step random walk from baseline would imply.
    \item It is not merely persistence: a permutation that breaks within-respondent cross-issue coupling reduces the inside rate, and the gap is statistically clean.
    \item The first-neighbour random walk is, however, hard to clear: the observed inside rate exceeds it by only $\sim 0.06$. Local diffusion accounts for a substantial fraction of inside classification at this resolution, and stronger claims about coordinated long-range attraction require either finer resolutions (where the gaps with all nulls evolve in the directions discussed below) or alternative neighbourhood baselines.
\end{itemize}

\subsection{How the gaps depend on the resolution}\label{subsec:results-scale}

The behaviour of $\hat p_k$ across $k$ is interesting only relative to a baseline. Figure~\ref{fig:scale} reports all four statistics across resolutions, alongside the gap profile that summarises which nulls are exceeded at each scale.

\begin{figure}[H]
\centering
\includegraphics[width=0.98\linewidth]{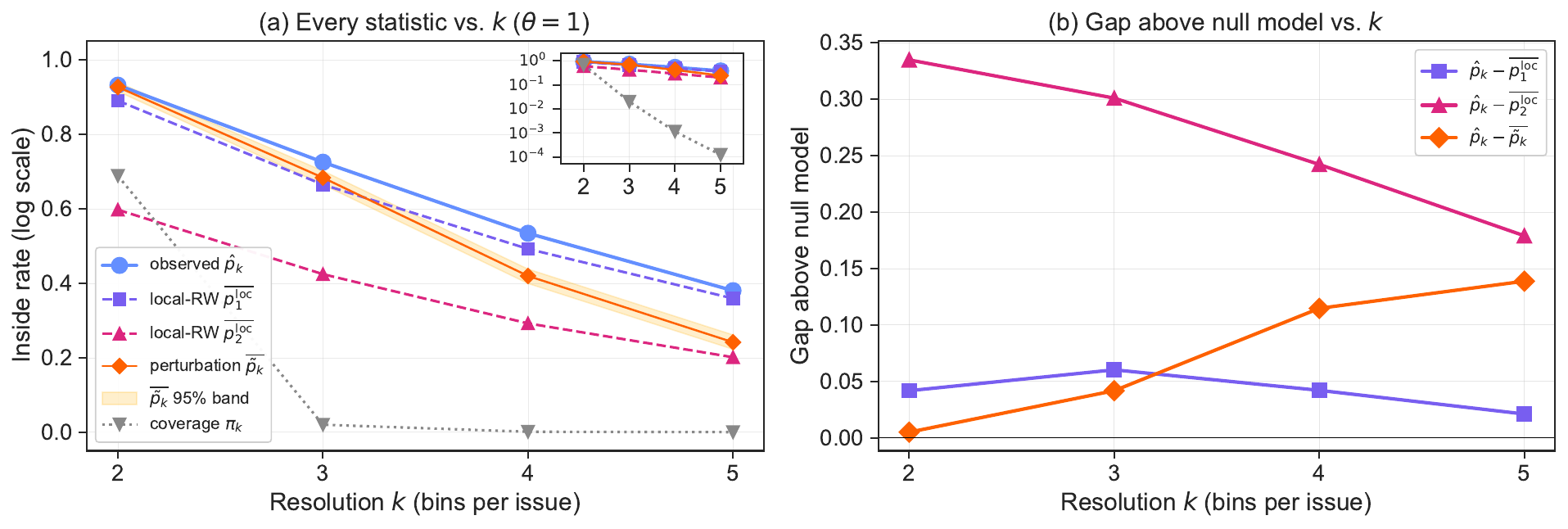}
\caption{\textbf{Scale dependence ($N=1194$, $n=10$, $\theta=1$).} (a) Observed inside rate $\hat p_k$, local random-walk expectations $\overline{p^{\rm loc}_1}, \overline{p^{\rm loc}_2}$, perturbation null model mean $\overline{\tilde p_k}$ with $95\%$ band, and analytic coverage baseline $\pi_k$, all on a logarithmic $y$-axis. The inset shows the same plot, but in log scale. (b) Gap above null model: $\hat p_k - \overline{p^{\rm loc}_1}$, $\hat p_k - \overline{p^{\rm loc}_2}$, and $\hat p_k - \overline{\tilde p_k}$ as functions of $k$.}
\label{fig:scale}
\end{figure}

\noindent Three patterns emerge.

\paragraph{Coverage gap} The observed inside rate sits orders of magnitude above the analytic coverage baseline $\pi_k$ at all $k$, with the ratio $\hat p_k / \pi_k$ growing from $1.35$ at $k=2$ to roughly $3120$ at $k=5$. The coverage null is therefore always exceeded--but most decisively at fine resolution, where $\pi_k$ is essentially zero and any non-trivial inside rate is informative.

\paragraph{Random-walk gaps}
The first-neighbour gap $\hat p_k - \overline{p^{\rm loc}_1}$ is small at every resolution ($0.04$ at $k=2$, $0.06$ at $k=3$, $0.04$ at $k=4$, $0.02$ at $k=5$), meaning that a respondent taking a single random step from their baseline position would land inside the normative region almost as often as respondents actually do. Local diffusion alone therefore accounts for most of the observed inside rate at every scale. The second-neighbour gap $\hat p_k - \overline{p^{\rm loc}_2}$ tells a different story: it is large ($0.33$ at $k=2$) and decreases steadily with $k$ ($0.30$, $0.24$, $0.18$), reflecting the fact that allowing two steps exposes far more of the lattice to the random walker, and the difference between random two-step movement and actual concentrated movement is largest when the lattice is densely populated.

\paragraph{Perturbation gap}
The perturbation gap $\hat p_k - \overline{\tilde p_k}$ is the most diagnostic of the three. It is negligible at $k=2$ ($0.005$) and grows steadily to $0.04$ at $k=3$, $0.11$ at $k=4$, and $0.14$ at $k=5$. The reason is geometric: at coarse resolution, cells are wide relative to a typical wave-to-wave displacement, so whether a respondent's issue changes are coupled or independent makes little difference to which cell they end up in. As the lattice refines, cells narrow and cross-issue coordination starts to determine cell membership. The perturbation null model, which breaks within-respondent coupling while preserving marginal change distributions, therefore finds nothing to detect at $k=2$ but an increasingly clear signal at $k=4$ and $k=5$. This is a substantive finding in its own right: coupled cross-issue updating is only detectable at fine resolutions, precisely where  observed inside rates are well below $1$ but still well above $\pi_k$.

Trimmed-occupancy results (Appendix~\ref{app:robustness}, Figure~\ref{fig:appendix-A1}) preserve all three patterns and amplify the magnitudes: the second-neighbour gap reaches $0.54$ at $k=2$ trimmed, and the perturbation gap reaches $0.16$ at $k=4$ trimmed. The qualitative scale dependence is therefore not a function of the occupancy rule.

\subsection{Why the gradient looks the way it does}\label{subsec:results-geometry}

The previous two subsections established that the observed inside rate exceeds each null model by margins that depend systematically on $k$. But they left open a more basic question: why does $\hat{p}_k$ itself slide so steeply across resolutions, dropping from $0.93$ at $k=2$ to $0.38$ at $k=5$? One might attribute this entirely to behaviour --- respondents becoming less likely to land in normative regions as the lattice refines --- but the coverage baseline $\pi_k$ suggests a simpler explanation. Most of this variation in $\hat{p}_k$ is a geometric inevitability: as $k$ grows, the number of cells $k^n$ grows much faster than the sample size $N$, so the lattice becomes increasingly sparse and the normative region $\mathcal{N}_k(\theta)$ shrinks as a fraction of the total grid. This subsection makes that argument precise, using the analytic coverage baseline to decompose the empirical gradient into a geometric component and a residual that reflects genuine concentration of attitudes in the population.

Figure~\ref{fig:geometry} makes this decomposition visible. Panel (a) plots  the analytic coverage baseline $\pi_k$ as a continuous function of $k$ for  several values of $N$, with the experimental sample size $N=1194$ highlighted. The curve drops sharply as $k$ increases past the crossover $k^* = N^{1/n} \approx 2.03$, marking the transition from the saturation regime, where $k^n \lesssim N$ and almost every cell is visited, to the sparse regime, where $k^n \gg N$ and most cells are empty. Overlaid on the same panel are the empirical occupied fractions $|V_k|/k^n$ at $k \in \{2,3,4,5\}$, which sit consistently \emph{below} the analytic curve at every resolution.
This shortfall reflects the fact that real attitudes are not uniformly distributed: respondents cluster in certain regions of $\mathcal{X}_k$, so the $N$ samples land repeatedly on the same cells and visit fewer distinct cells than uniform random placement would predict. 
Panel (b) places the experiment in the broader $(k, N)$ plane, with contour lines at $\pi_k = 0.01$ and $\pi_k = 0.99$ demarcating three regimes: saturated (where the inside rate is near one regardless of behaviour), vanishing (where any non-zero inside rate is informative), and an informative band in between. The experimental track passes through the informative band at $k \in \{3,4,5\}$ and enters the saturated regime at $k=2$, confirming that the focal resolutions are  appropriately chosen for detecting a behavioural signal.

\begin{figure}[H]
\centering
\includegraphics[width=1.0\linewidth]{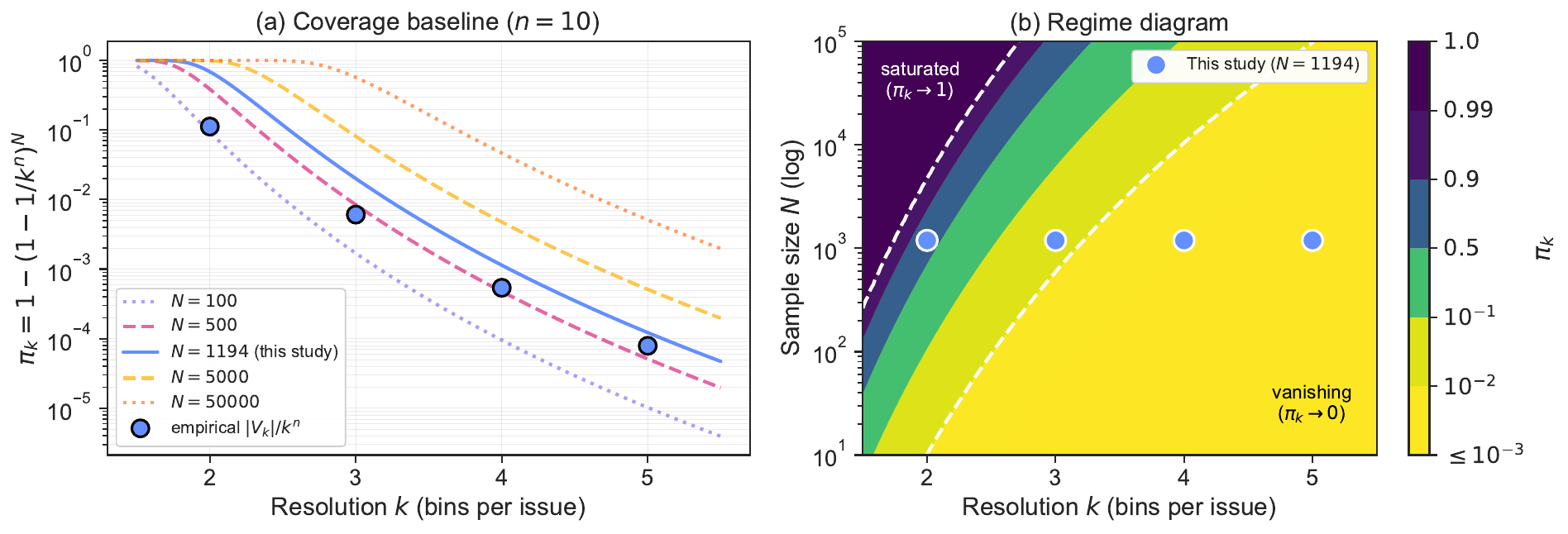}
\caption{\textbf{Geometric account of scale dependence ($n=10$).} (a) The analytic coverage baseline $\pi_k = 1 - (1-1/k^n)^N$ as a function of $k$ (treated continuously) for several values of $N$, with the experimental sample size highlighted; the empirical occupied fractions $|C_k|/k^n$ at $k\in\{2,3,4,5\}$ are overlaid as filled circles. (b) The same coverage baseline as a heat map over $(k, N)$, with the experimental track and contour lines at $\pi_k = 0.01$ and $\pi_k = 0.99$.}
\label{fig:geometry}
\end{figure}

On Figure~\ref{fig:geometry}(a), the coverage baseline $\pi_k(N=1194,n=10)$ slides from $0.69$ at $k=2$ to $1.2 \times 10^{-4}$ at $k=5$. The crossover $k^* = N^{1/n} \approx 2.03$ sits between $k=2$ and $k=3$, marking the transition between the saturation regime ($N/k^n \gtrsim 1$, $\pi_k \approx 1$) and the sparse regime ($N/k^n \ll 1$, $\pi_k \approx 0$). The qualitative shape of the empirical inside rate $\hat p_k$ tracks this geometric prediction, showing that empirical occupancy is below uniform. 
Still on Figure~\ref{fig:geometry}(a), the empirical points $|C_k|/k^n$ sit \emph{below} the analytic curve at every resolution by a non-trivial factor: $115/1024 = 11.2\%$ vs.\ $\pi_2 \approx 68.9\%$ (factor $6.1\times$); $359/59{,}049 = 0.61\%$ vs.\ $\pi_3 \approx 2.0\%$ (factor $3.3\times$); $565/4^{10} = 0.054\%$ vs.\ $\pi_4 \approx 0.114\%$ (factor $2.1\times$); $771/5^{10} = 0.0079\%$ vs.\ $\pi_5 \approx 0.012\%$ (factor $1.5\times$). The shortfall arises because real attitudes are not uniformly distributed on $\mathcal{X}_k$: they cluster, so the $N$ samples land repeatedly on the same cells and visit fewer distinct cells than uniform random placement would. The concentration ratio $\pi_k k^n / |C_k|$ decreases with $k$ as the lattice becomes too sparse for clustering to matter.

The regime diagram on Figure~\ref{fig:geometry}(b) places the experiment in the broader $(k, N)$ plane. The two contour lines $\pi_k = 0.01$ and $\pi_k = 0.99$ separate three regimes: a saturated regime where $\pi_k \approx 1$ and the inside rate carries little information about behaviour, a vanishing regime where $\pi_k \approx 0$ and any non-zero inside rate is informative, and an informative band in between. For $n=10$, the $N=1194$ track passes through the informative band at $k\in\{3,4,5\}$ and into the saturated regime at $k=2$.

\subsection{Summary}\label{subsec:results-summary}

The three findings together support a layered conclusion. Network-normative attraction is real: $\hat p_k$ exceeds the coverage baseline by a factor that ranges from $1.4$ at $k=2$ to over $3{,}000$ at $k=5$, exceeds the second-neighbour random walk by $0.18$--$0.33$, and exceeds the perturbation null by margins that grow with $k$. It is not merely a property of where one slices the lattice: each null is exceeded, and the gap profile across $k$ is stable in its qualitative shape. But the strength of the evidence depends on which null one cares about. The first-neighbour random walk is hard to clear at any $k$, suggesting that local diffusion is a substantial part of the story; the perturbation null is hardest to clear at coarse resolution and easiest at fine resolution, indicating that coupled cross-issue updating is a fine-resolution phenomenon. Treating the framework's output as a single inside rate at a single $k$ obscures all of this; the gap profile across $k$ and different null models is the natural observable.

\section{Discussion}\label{sec:discussion}

\subsection{Scale dependence}

A primary consequence of the framework in Section~\ref{sec:framework} is that scale dependence of network-normative attraction is, in part, a mathematical inevitability. The coverage baseline $\pi_k = 1 - (1 - 1/k^n)^N \approx 1 - e^{-N/k^n}$ depends only on the geometry of the lattice and the sample size; it predicts a sharp drop in expected inside rates as $k$ grows past the regime $k^n \sim N$. The empirical scale gradient observed in Section~\ref{sec:results} mirrors this prediction qualitatively, because the experiment's parameters $(N, n) = (1194, 10)$ place the threshold $k^* \approx N^{1/n} \approx 2.03$ between $k=2$ and $k=3$. Any analysis of high-dimensional belief space that uses fixed-resolution discretisation will encounter the same crossover, and any reported ``attraction to common configurations'' must be interpreted relative to where on the curve $\pi_k(N)$ the chosen resolution sits.

This observation is not specific to attitudes. The coverage formula is structurally identical to the classical occupancy problem in combinatorial probability, and the same expression appears in different guises across urn models, species-accumulation problems, and coupon-collector arguments. Translating it into the language of opinion dynamics gives a clean prescription: when partitioning a high-dimensional attitude space into $k^n$ cells, the analyst should report $N/k^n$ alongside any inside-rate statistic, since this ratio determines the coverage baseline and hence the level above which a substantive interpretation can begin.

\subsection{A renormalisation-group perspective}

The scale-dependent behaviour identified here is reminiscent of coarse-graining in statistical physics. In a renormalisation-group picture, observables computed at one resolution are not simply rescaled versions of those computed at a finer resolution: relevant and irrelevant degrees of freedom emerge as one moves up or down the scale ladder, and apparent macroscopic regularities (basins of attraction, ordered phases) can be artefacts of the resolution at which the system is observed. 
The belief network exhibits similar logic. The underlying system is fixed: the same $N$ respondents, the same raw attitude profiles, the same wave-to-wave changes. What varies across $k$ is only the resolution at which that system is discretised. Increasing $k$ is analogous to running the renormalisation group in reverse --- moving from coarse to fine --- and watching how the observable $\hat{p}_k$ changes as a result. The coverage baseline $\pi_k$ plays the role of a trivial reference point: it describes what a structureless system would look like at each scale, given only the geometry of the lattice and the sample size. The gap $\hat{p}_k - \pi_k$ is then the deviation from this reference, and its non-trivial shape across $k$ is the analogue of a non-trivial flow away from the trivial fixed point.

What this suggests is that network-normative attraction is best understood as a multi-scale property. Single-resolution reports are inherently ambiguous; a complete characterisation requires the curve $\hat{p}_k(\theta)$ as $k$ varies, together with the analytic curve $\pi_k$ for the same $(n, N)$. The gap between these two curves, rather than the value of either at a single $k$, is the empirically meaningful quantity.

\subsection{The perturbation gap and cross-issue coupling}

The most surprising empirical finding from the framework is the resolution-dependent behaviour of the perturbation gap $\hat p_k - \overline{\tilde p_k}$, which is essentially zero at coarse resolution and grows steadily with $k$. Because the perturbation null preserves marginal change distributions while disrupting within-respondent cross-issue coupling, this pattern says that coupled cross-issue updating--respondents adjusting several positions in a coordinated way that jointly tracks normative regions--is invisible at $k=2$ but accounts for $\sim 0.14$ of the inside rate at $k=5$.

The mechanical reason is that small cross-issue coordination cannot move respondents across cell boundaries when cells are large. At $k=2$ each cell occupies a hyper-octant of the unit cube, and a typical wave-to-wave displacement (with the Laplace distribution scale $b \approx 0.087$ on each issue) is small relative to the cell width $1/k = 1/2$. At $k=5$ the cell width is $0.20$ and the same displacements regularly cross cell boundaries, so the geometry of the displacement vector--whether issue changes are coupled or independent--starts to matter for cell membership. The framework therefore detects coupling exactly where coupling has room to act geometrically.

This reverses a naive reading of the results. Coarse resolution gives the largest \emph{absolute} inside rate but the least \emph{informative} signal, because most of the rate is reproducible from marginal change distributions alone. Fine resolution gives a smaller absolute rate but a cleaner signal of coordinated movement. The methodological consequence is that the perturbation null is not just a sanity check; it is the diagnostic that distinguishes scale regimes in which inside classification is dominated by geometry from regimes in which it carries information about behavioural coupling.
\section{Limitations}\label{sec:limitations}

Several limitations delimit the inferential scope of the analysis. First, discretisation is unavoidable for finite-state network modelling, and different resolutions induce different lattice topologies. We have made this dependence the central object of analysis rather than an obstacle; nonetheless, conclusions about absolute inside rates remain conditional on a chosen $(k,\theta)$ pair. Second, two-wave data identify origin and destination but not the intermediate trajectory. Higher-frequency longitudinal data would permit the explicit reconstruction of paths in $\mathcal{X}_k$ and a direct test of the random-walk null at the trajectory level rather than at the destination level. Third, the local random-walk null and the marginal-permutation null are informative counterfactuals but not full behavioural models: they cannot, by themselves, separate psychological conformity from strategic adjustment from measurement artefact. Fourth, the analysis is carried out on attitudinal outcomes; results should not be directly generalised to behavioural outcomes such as voter turnout, vote choice, or collective action. Fifth, the baseline distribution of attitudes in the sample is markedly non-uniform--most respondents support the progressive framings of the statements--so the network-normative regions $\mathcal{N}_k(\theta)$ are concentrated in a limited part of $\mathcal{X}_k$. Sample compositions with different ideological balance would induce different normative regions and potentially different scale-dependence curves. Sixth, and most importantly, all empirical findings reported here derive from a single dataset collected in the United Kingdom at a specific moment in time, on a specific set of ten policy issues. Whether the scale-dependence curves, gap profiles, and perturbation patterns documented here generalise to other populations, other issue sets, or other political contexts remains an open question. The framework itself is general, but its empirical conclusions are not: replication across datasets with different sample compositions, different numbers of issues, and different cultural and political settings is a necessary next step before any of the quantitative findings can be treated as robust.

\section{Conclusion}\label{sec:conclusion}

This paper has developed a network-theoretic framework for analysing belief updating in high-dimensional attitude space, applied it to a two-wave panel of $N=1194$ respondents on $n=10$ policy issues, and used the framework to address a methodological problem implicit in any network analysis of opinion change: how much of the apparent attraction to densely populated regions is a behavioural signal, and how much is a combinatorial property of the lattice on which the analysis is performed?

The framework introduces three null models that together separate these effects. The analytic coverage baseline $\pi_k = 1 - (1 - 1/k^n)^N$ is closed-form, parameter-free, and explains the bulk of the empirical scale gradient observed across resolutions. The local random-walk benchmark, computed at first- and second-neighbour radii, retains baseline persistence and lattice geometry, and identifies attraction beyond what local diffusion would explain. The marginal-permutation null preserves issue-level change distributions while disrupting cross-issue coupling, and isolates the contribution of coordinated within-respondent movement. The hierarchy provides both a benchmarking procedure for empirical work and a typology for opinion-dynamics models.

Empirically, the analysis shows that evidence for network-normative attraction is real but its strength depends on which null one cares about. The observed inside rate exceeds the coverage baseline by orders of magnitude at fine resolution, exceeds the second-neighbour random walk by sizeable margins at every resolution, and exceeds the perturbation null at fine resolution but only marginally at coarse resolution. The first-neighbour random walk is competitive with observed movement throughout. The substantive conclusion is that network-normative attraction is best characterised as a multi-statistic, multi-scale property, with the gap profile $\bigl(\hat p_k - \text{null model}\bigr)$ rather than the inside rate itself as the natural observable. These findings are, however, grounded in a single dataset, and their generalisability is not yet established. Validating the gap profiles and scale-dependence curves across datasets with different sample compositions, issue sets, and political contexts is a priority for future work.

Future work should extend the framework along four directions. First, higher-frequency longitudinal data would permit reconstruction of intermediate trajectories in $\mathcal{X}_k$ and direct testing of the random-walk null at the path level. Second, the framework can be paired with generative opinion-dynamics models simulated under matched parameters, allowing model-data comparisons that respect the lattice geometry. Third, sample-composition effects on the normative regions $\mathcal{N}_k(\theta)$ can be studied by analysing how scale-dependence curves shift under reweighting toward different ideological balances. Fourth, and most pressingly, the framework should be applied to additional two-wave panels drawn from different national contexts, issue domains, and time periods, to establish whether the quantitative patterns reported here --- the magnitude of the perturbation gap, the competitive performance of the first-neighbour null, the crossover at $k^* \approx N^{1/n}$ --- are stable empirical regularities or features specific to this sample. Each of these extensions builds on the same mathematical core: the lattice-graph representation of high-dimensional attitudes, the empirical occupancy measure as a stand-in for an attractor landscape, and the null-model hierarchy as the lens through which both empirical patterns and theoretical models should be evaluated.

\section{Methods}\label{sec:methods}

Attitudes were measured on continuous visual analogue scales over ten policy issues at baseline and follow-up, yielding raw profiles $a_i^{(t)} \in [0,1]^{10}$ for $i \in \{1, \ldots, 1194\}$ and $t \in \{1,2\}$. Profiles were discretised at four resolutions $k \in \{2,3,4,5\}$ via the nearest-bin-centroid projection $\phi_k$ defined in Section~\ref{sec:framework}; ties were broken by lexicographic order of coordinates. Baseline empirical occupation $\mu_k(v) = \sum_i \mathbbm{1}[\phi_k(a_i^{\text{origin}}) = v]$ was computed on the resulting grids, and the occupied set $C_k = \mathrm{supp}(\mu_k)$ was retained as the vertex set of the empirical belief network. Adjacency was defined by single-issue unit moves on the lattice. Network-normative regions $\mathcal{N}_k(\theta)$ were computed under both untrimmed ($\theta=1$) and trimmed ($\theta=2$) occupancy rules. Follow-up profiles $\phi_k(a_i^{\text{destination}})$ were classified as inside $\mathcal{N}_k(\theta)$ or outside.

The coverage baseline $\pi_k$ was computed analytically from Proposition~\ref{prop:coverage} for each $k$ at the empirical sample size $N=1194$. The local random-walk null was computed by enumerating, for each respondent, the reachable set $\mathcal{R}_r(\phi_k(a_i^{\text{origin}}))$ for $r \in \{1, 2\}$ on the discretised lattice, intersecting with $\mathcal{N}_k(\theta)$, and computing the per-respondent expected inside probability $p_{r,k,\theta}^{\mathrm{loc}}$ from equation~\eqref{eq:p_loc}. Boundary cells were handled by excluding moves that would leave the unit hypercube. The expected inside rate $E_{\mathrm{loc}}[\hat{p}_k(\theta)]$ was obtained by averaging $p_{r,k,\theta}^{\mathrm{loc}}$ over respondents.

The perturbation null was computed by drawing, for each issue $j$, an independent uniform random permutation $\sigma_j$ of the respondent indices, and constructing perturbed follow-up profiles $\tilde{a}_{ij} = a_{ij}^{\text{origin}} + (a_{\sigma_j(i),j}^{\text{destination}} - a_{\sigma_j(i),j}^{\text{origin}})$. The full pipeline (discretisation, occupancy, inside/outside classification) was rerun on the perturbed data over $200$ independent permutations; we report the perturbation mean and the $2.5$/$97.5$ percentiles.

\section*{Acknowledgements}
The author thanks the Alan Turing Institute (EPSRC grant EP/N510129/1) and the Volkswagen Foundation.


\newpage

\appendix
\section{Supplementary Robustness Figures}\label{app:robustness}

This appendix collects the full set of robustness visualisations underlying the main-text claims. Figure~\ref{fig:appendix-A1} shows the inside-rate statistics across all $(k, \theta)$ combinations; Figure~\ref{fig:appendix-A2} shows per-respondent observed-vs-expected scatter at every $(k, \theta, r)$ combination; Figure~\ref{fig:laplace} shows the pooled distribution of issue-level opinion change with a Laplace MLE fit, justifying the heavy-tailed distributional assumption that motivates the perturbation null; Figure~\ref{fig:marginals} shows the per-issue attitude distributions at each wave; Figure~\ref{fig:lattice-stats} shows lattice-statistics summaries; and Figure~\ref{fig:perturbation-dists} shows the histograms of perturbation-null inside rates.

\begin{figure}[H]
\centering
\includegraphics[width=0.98\linewidth]{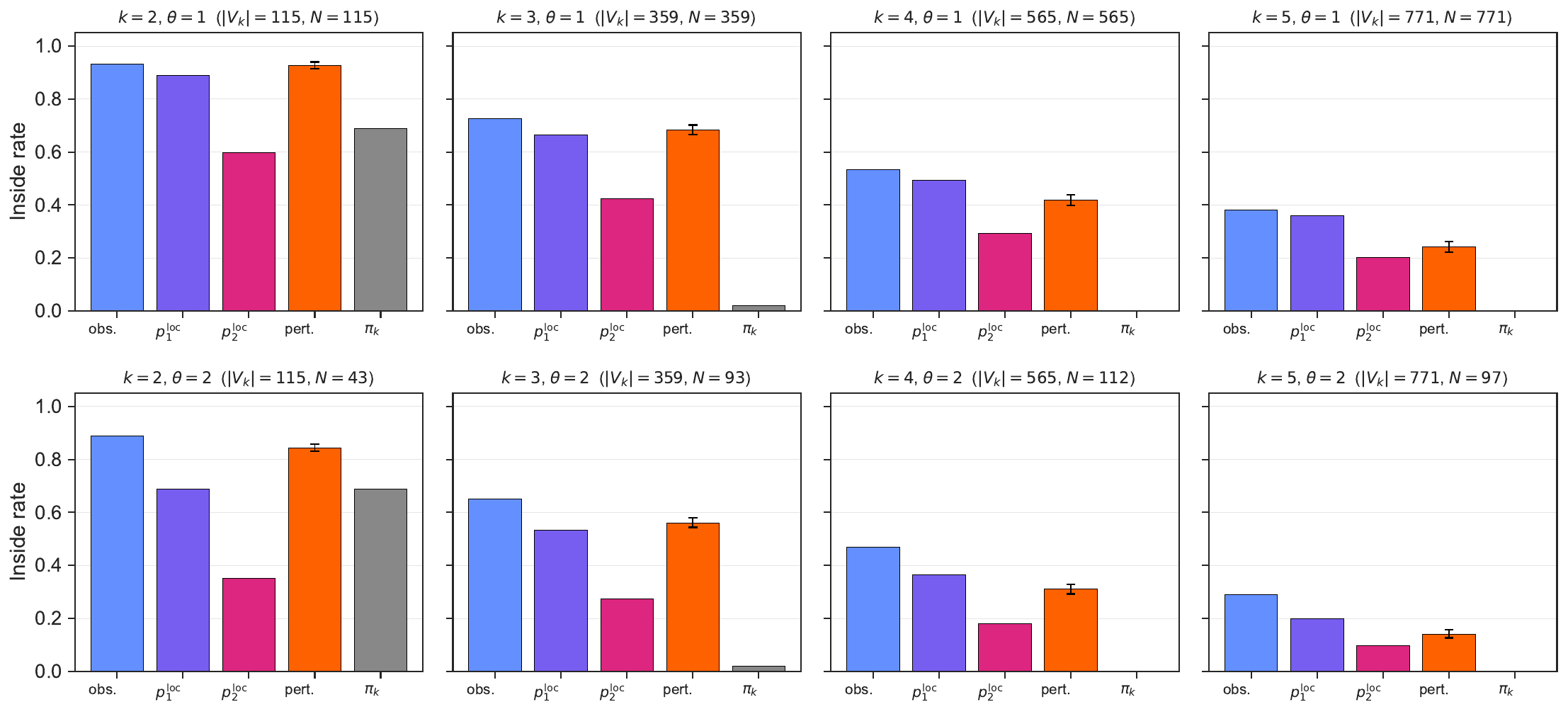}
\caption{\textbf{Inside-rate statistics across all $(k, \theta)$.} Each cell shows observed inside rate, first- and second-neighbour random-walk expectations, perturbation null mean (with $95\%$ band), and analytic coverage baseline $\pi_k$. Top row: untrimmed occupancy. Bottom row: trimmed occupancy. Trimming reduces the size of the normative set $|\mathcal{N}_k|$ (reported in panel titles) but generally amplifies the gaps with the random-walk and perturbation nulls.}
\label{fig:appendix-A1}
\end{figure}

\begin{figure}[H]
\centering
\includegraphics[width=0.98\linewidth]{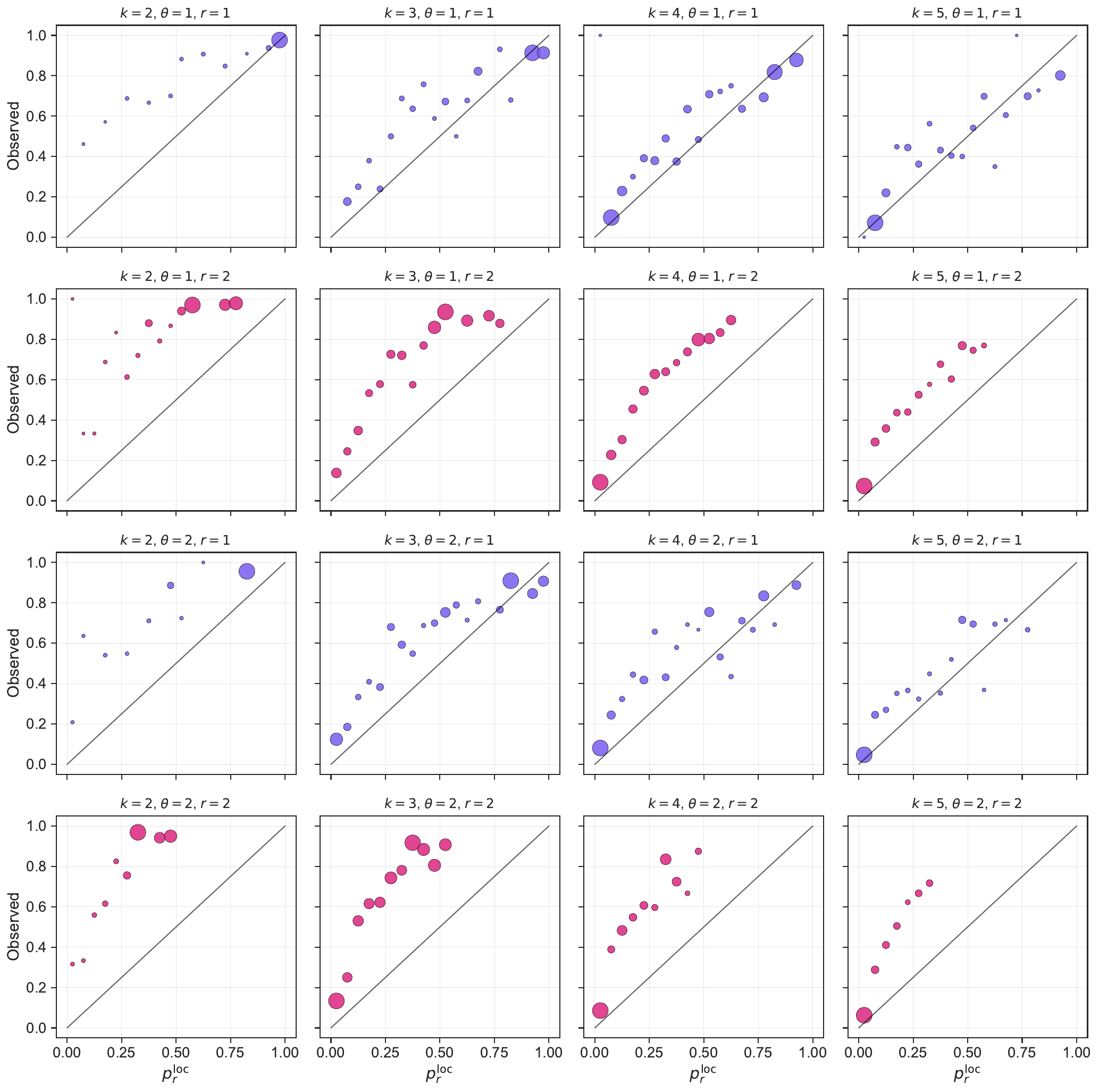}
\caption{\textbf{Per-respondent observed inside rate vs.\ local random-walk expectation, all $(k,\theta,r)$.} Columns: $k\in\{2,3,4,5\}$. Rows from top: $(\theta=1, r=1)$, $(\theta=1, r=2)$, $(\theta=2, r=1)$, $(\theta=2, r=2)$. Marker size is proportional to the number of respondents in each $x$-bin.}
\label{fig:appendix-A2}
\end{figure}

\begin{figure}[H]
\centering
\includegraphics[width=0.7\linewidth]{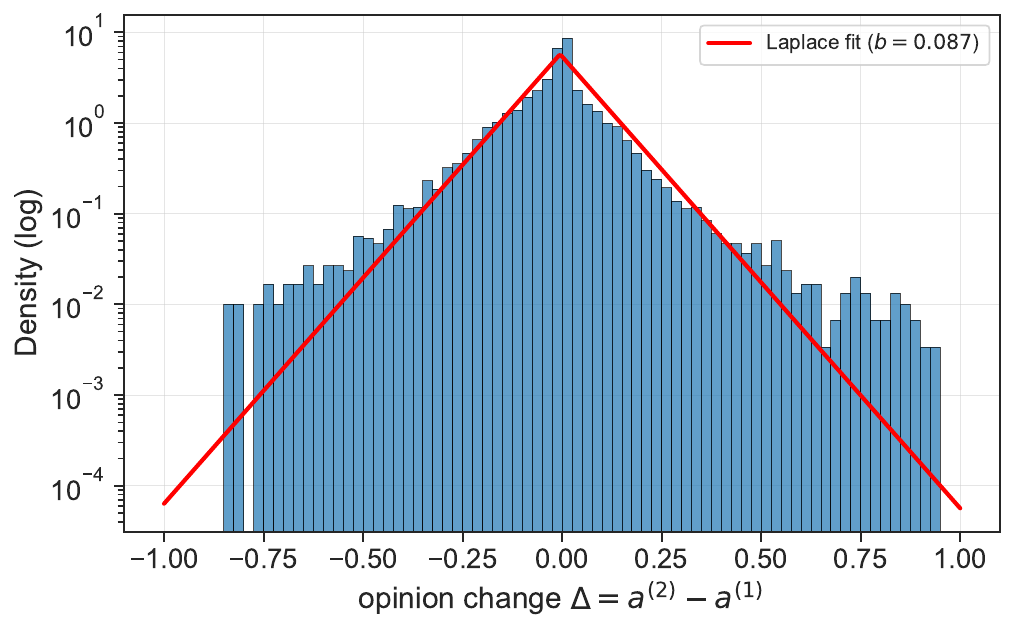}
\caption{\textbf{Pooled distribution of issue-level opinion change.} Histogram on log scale across all respondents and issues, with a Laplace MLE overlay. The linear shape on the log scale is the Laplace signature; the empirical fit gives location $\mu \approx 0$ and scale $b \approx 0.087$.}
\label{fig:laplace}
\end{figure}

\begin{figure}[H]
\centering
\includegraphics[width=0.98\linewidth]{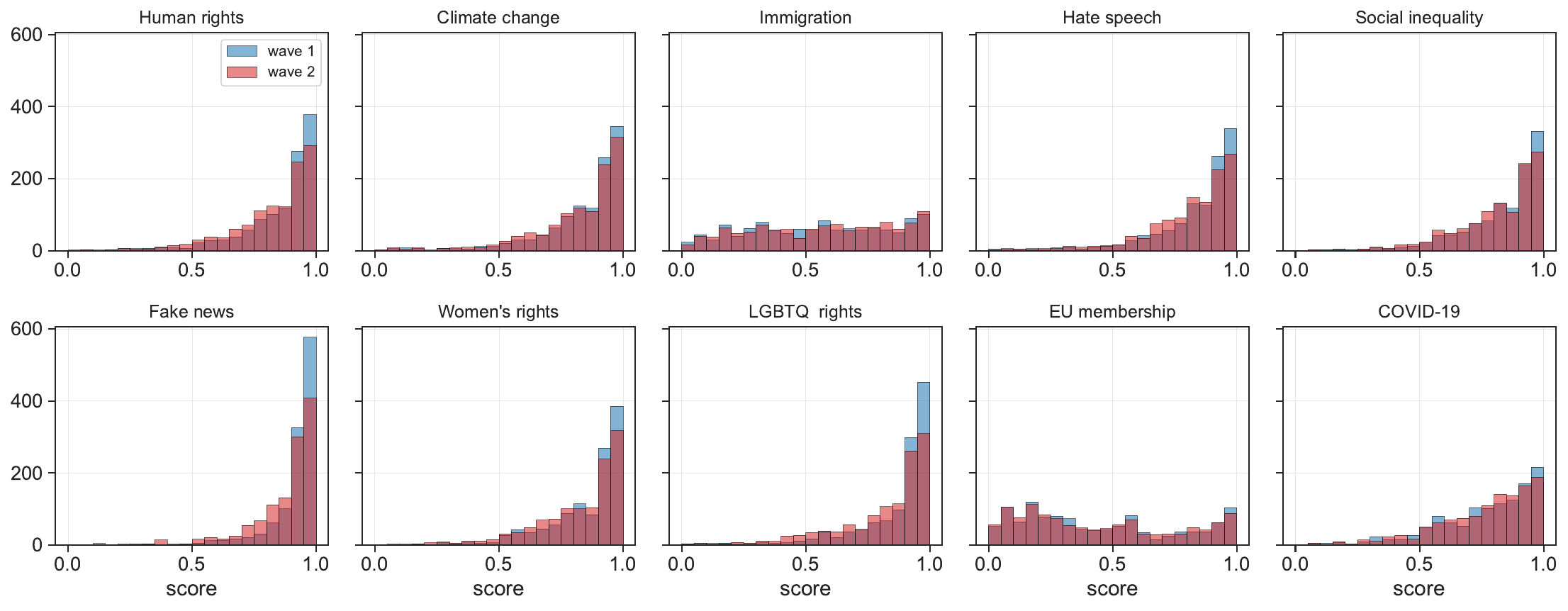}
\caption{\textbf{Per-issue attitude distributions at wave 1 and wave 2.} The distributions are markedly non-uniform: most respondents support the progressive framing of each statement, with the exception of EU membership and Immigration, which have inverted distributions due to opposing-direction framing in the original survey.}
\label{fig:marginals}
\end{figure}

\begin{figure}[H]
\centering
\includegraphics[width=0.98\linewidth]{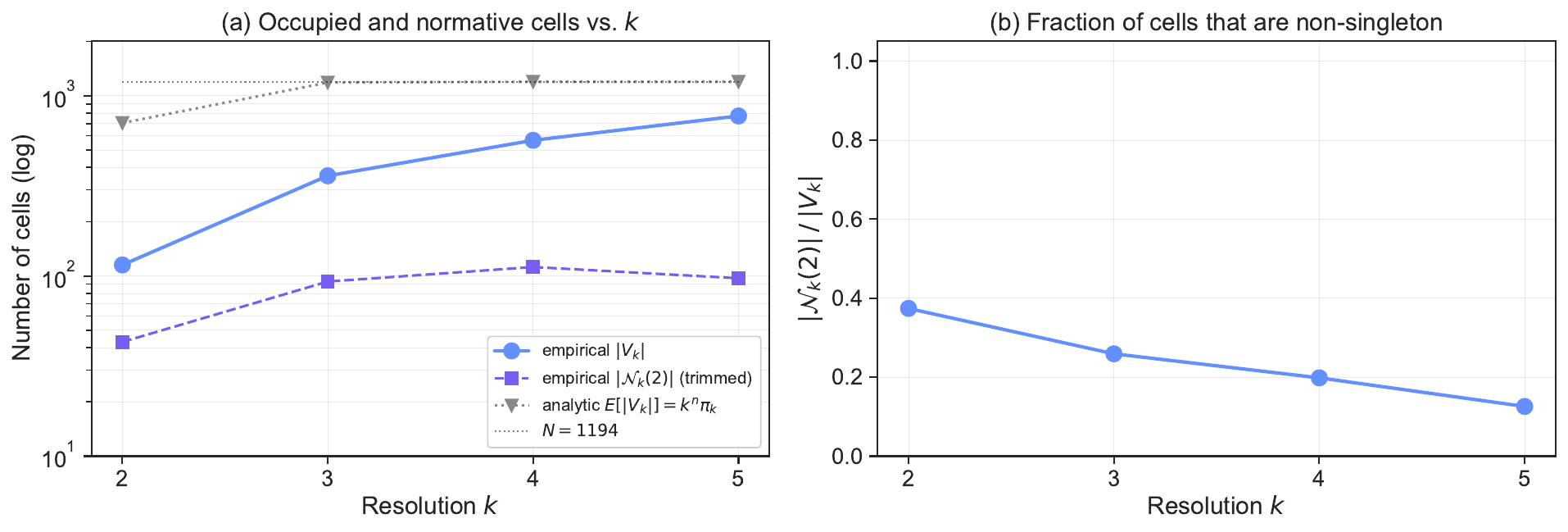}
\caption{\textbf{Lattice statistics across resolutions.} (a) Number of empirically occupied cells $|C_k|$ and trimmed normative cells $|\mathcal{N}_k(2)|$ alongside the analytic prediction $E[|C_k|] = k^n \pi_k$ under uniform random placement. The empirical points sit below the analytic curve at every resolution, reflecting baseline concentration. (b) Fraction of occupied cells that are non-singleton, $|\mathcal{N}_k(2)|/|C_k|$, as a function of $k$.}
\label{fig:lattice-stats}
\end{figure}

\begin{figure}[H]
\centering
\includegraphics[width=0.98\linewidth]{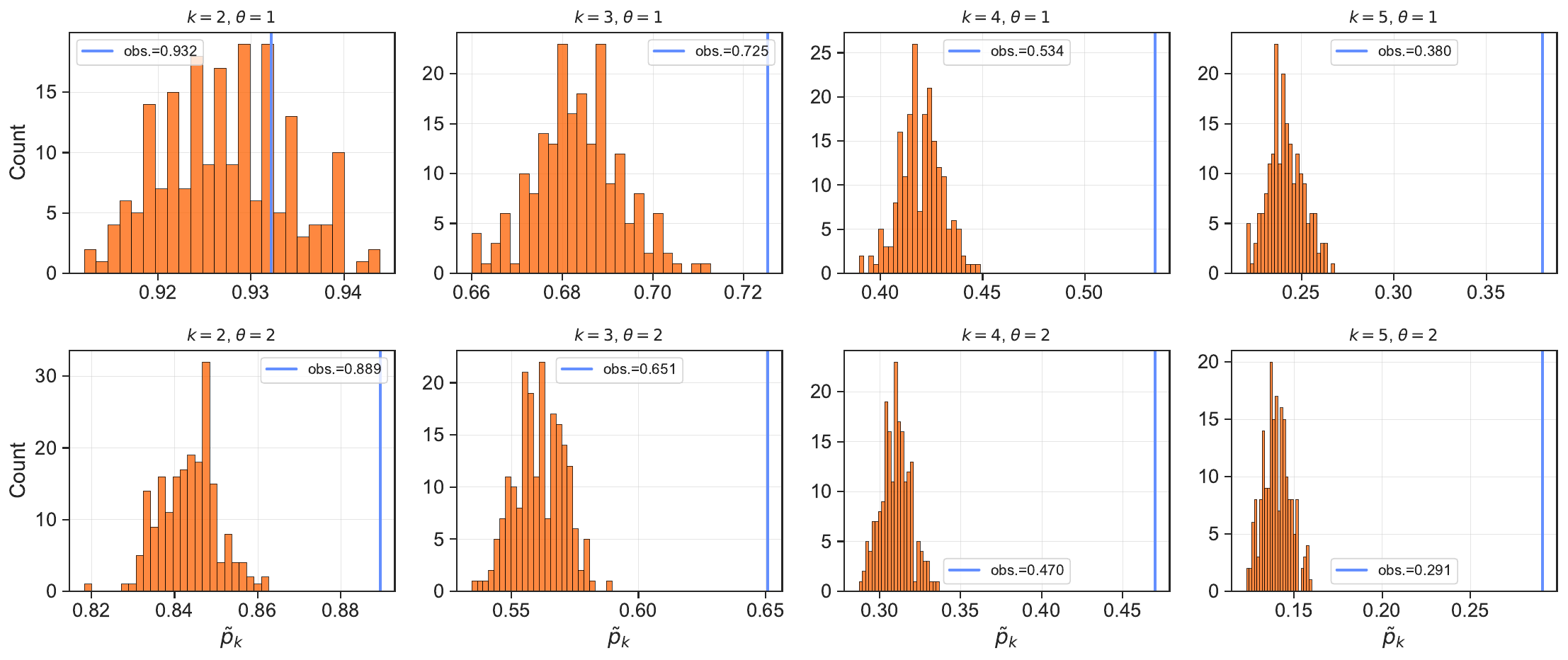}
\caption{\textbf{Distributions of perturbation-null inside rates ($200$ permutations).} Each panel shows the histogram of $\tilde{p}_k(\theta)$ under independent issue-wise permutation of opinion changes. The vertical line marks the empirical $\hat{p}_k(\theta)$. Top row: untrimmed. Bottom row: trimmed.}
\label{fig:perturbation-dists}
\end{figure}

\bibliographystyle{plain}
\bibliography{sample}

@article{Lorenz2011,
author = {Lorenz, J. and Rauhut, H. and Schweitzer, F. and Helbing, D.},
doi = {10.1073/pnas.1008636108},
issn = {0027-8424},
journal = {Proceedings of the National Academy of Sciences},
number = {22},
pages = {9020--9025},
title = {{How social influence can undermine the wisdom of crowd effect}},
volume = {108},
year = {2011}
}

@article{Bond1996,
author = {Bond, Rod and Smith, Peter B.},
doi = {10.5040/9781472551351.ch-016},
journal = {Psychological Bulletin},
number = {I},
pages = {111--137},
title = {{Culture and Conformity: A Meta-Analysis of Studies Using Asch's (1952b, 1956) Line Judgement Task}},
volume = {119},
year = {1996}
}

@article{Watts2007,
author = {Watts, Duncan J and Dodds, Peter Sheridan and Dodds, Peter Sheridan},
journal = {Journal of Consumer Research},
number = {4},
pages = {441--458},
title = {{Influentials, networks, and public opinion formation}},
volume = {34},
year = {2007}
}

@article{Solomon1955,
author = {Asch, Solomon},
doi = {10.1038/scientificamerican1155-31},
issn = {0036-8733},
journal = {Scientific American},
number = {5},
pages = {31--35},
title = {{Opinions and social pressure}},
url = {http://www.freepatentsonline.com/WO2003027106.html},
volume = {193},
year = {1955}
}

@article{BikhchandaniSushilHirshleifer1992,
author = {{Bikhchandani, Sushil, Hirshleifer}, David and and Welch, Ivo},
doi = {10.1127/njgpa/2016/0578.Beardmore},
isbn = {0542792281},
journal = {Journal of Political Economy},
number = {5},
pages = {992--1026},
title = {{A theory of fads, fashion, custom, and cultural change as informational cascades}},
volume = {100},
year = {1992}
}

@article{nadeau1993new,
  title={New evidence about the existence of a bandwagon effect in the opinion formation process},
  author={Nadeau, Richard and Cloutier, Edouard and Guay, J-H},
  journal={International Political Science Review},
  volume={14},
  number={2},
  pages={203--213},
  year={1993},
  publisher={Sage Publications Sage CA: Thousand Oaks, CA}
}

@article{marsh1985back,
  title={Back on the bandwagon: The effect of opinion polls on public opinion},
  author={Marsh, Catherine},
  journal={British Journal of Political Science},
  volume={15},
  number={1},
  pages={51--74},
  year={1985},
  publisher={Cambridge University Press}
}

@incollection{margetts2015political,
  title={Political turbulence},
  author={Margetts, Helen and John, Peter and Hale, Scott and Yasseri, Taha},
  booktitle={Political Turbulence},
  year={2015},
  publisher={Princeton University Press}
}

@article{mutz1992mass,
  title={Mass media and the depoliticization of personal experience},
  author={Mutz, Diana C},
  journal={American Journal of Political Science},
  pages={483--508},
  year={1992},
  publisher={JSTOR}
}

@article{Fleitas1971,
  title={Bandwagon and Underdog Effects in Minimal-Information Elections},
  author={Fleitas, Daniel W.},
  journal={American Political Science Review},
  volume={65},
  number={2},
  pages={434-438},
  year={1971},
  publisher={}
}

@article{Chatterjee2021,
  title={Voting for the underdog or jumping on the bandwagon? Evidence from India’s exit poll ban},
  author={Chatterjee, Somdeep and Kamal, Jai},
  journal={Public Choice},
  volume={188},
  number={3-4},
  pages={431-453},
  year={2021},
  publisher={}
}

@article{Kenney1996,
  title={The Psychology of Political Momentum},
  author={Kenney, Patrick J. and Rice, Tom W.},
  journal={Political Research Quarterly},
  volume={47},
  number={},
  year={1996},
  publisher={}
}

@article{Mehrabian1998,
  title={Effects of poll reports on voter preferences},
  author={Mehrabian, Albert},
  journal={Journal of Applied Social Psychology},
  volume={28},
  number={},
  pages={2119–2130},
  year={1999},
  publisher={}
}

@article{pantazipublic,
  title={How public opinion shapes individual policy preferences: do the conformists constrain the contrarians?},
  author={Pantazi, Myrto and Camargo, Chico and Margetts, Helen and John, Peter and Hale, Scott},
  publisher={OSF}
}

\end{document}